\documentclass[10pt,a4paper]{article}

\usepackage[T1]{fontenc}
\usepackage[utf8]{inputenc}
\usepackage{mathrsfs}       
\usepackage{calligra}      
\usepackage{verbatim}      

\usepackage{amsmath, amsfonts, amssymb, amscd, amsthm}
\usepackage[all]{xy}       

\usepackage[margin=2.5cm]{geometry} 

\usepackage{graphicx}
\usepackage{tikz}
\usepackage{float}          
\usepackage[figuresright]{rotating} 

\usepackage{longtable}      
\usepackage{multirow}       
\usepackage{makecell}       
\usepackage{diagbox}        
\usepackage{booktabs}       

\usepackage[perpage, symbol*]{footmisc} 
\usepackage[justification=centering]{caption} 
\usepackage[numbers, sort&compress]{natbib}    

\usepackage[dvipsnames]{xcolor} 
\usepackage{hyperref}
\hypersetup{
    colorlinks=true,
    linkcolor=Maroon,     
    citecolor=NavyBlue, 
    urlcolor=PineGreen,   
    bookmarksnumbered=true
}

\newtheorem{theorem}{\textbf{Theorem}}
\newtheorem{lemma}{\textbf{Lemma}}
\newtheorem{definition}{\textbf{Definition}}

\newtheorem{corollary}{\textbf{Corollary}}

\newtheorem{example}{\textbf{Example}}

\newcommand{\F}{\mathbb{F}}

\begin{document}
\baselineskip 17pt
\title{\Large\bf Non Reed-Solomon Type MDS Codes from Elliptic Curves}
\author{\large  Puyin Wang \quad\quad Wei Liu \quad\quad Jinquan Luo* \quad\quad Dengxin Zhai}
\footnotetext{The authors are with School of Mathematics and Statistics \& Hubei Key Laboratory of Mathematical Sciences, Central China Normal University, Wuhan China 430079. Dengxin Zhai is also with School of Mathematics and Statistics, Kashi University, 844000 Xinjiang, Kashi, China.
 The authors are supported by National Natural Science Foundation of China
(Nos. 12441102, 12171191, 12271199 ),  Natural Science Foundation of Xinjiang Uygur Autonomous Region (2022D01B128), SRMC Fund with grant no. 2024SRMC01
and the Fundamental Research Funds for the Central Universities grant no. CCNU25JCPT031.\\
E-mail: p.wang98@qq.com(P.Wang), 1450820784@qq.com(W.Liu), luojinquan@ccnu.edu.cn(J.Luo), dxzhai2022@126.com(D. Zhai)}
\date{}
\maketitle

{\bf Abstract}:
New families of maximum distance separable (MDS) codes are constructed from elliptic curves by exploiting their group structures. In contrast to classical constructions based on divisors supported at a single rational point, the proposed approach employs divisors formed by multiple distinct points constituting a maximal subgroup of the curve. The resulting codes achieve parameters approaching the theoretical upper bound $(q + 1 + \lfloor 2\sqrt{q} \rfloor)/2$ including MDS codes that are not equivalent to Reed-Solomon (RS) codes. The inequivalence of these codes to RS codes is established through an explicit analysis on the rank of the Schur product of their generator matrices. These results extend the known parameter range of elliptic MDS codes providing further evidence for the tightness of existing upper bounds.

{\bf Key words}: MDS code, elliptic curves, Reed-Solomon(RS) code, Schur product.

\section{Introduction}

Since the mid-twentieth century, linear codes have played a fundamental role in coding theory, providing systematic methods for error correction and reliable data transmission. A linear code $C \subseteq \mathbb{F}_q^n$ is characterized by its length $n$, dimension $k$, and minimum distance $d$, where $d = \min\{ \operatorname{wt}_H(c) : c \in C,\, c \ne 0 \}$. Here $\operatorname{wt}_H(c)$ denotes the Hamming weight of c, i.e., the number of nonzero components of the codeword. The Singleton bound $d \le n - k + 1$ establishes the fundamental trade-off among these parameters. Codes achieving equality in this bound, known as \textbf{maximum distance separable (MDS)} codes, have become central to both theory and practice, with applications ranging from distributed storage systems to multicast networks (\cite{CY}).

MDS codes have been extensively studied due to their theoretical significance and practical relevance, and numerous constructions have been proposed in the literature. Among them, Reed-Solomon (RS) codes form the most thoroughly investigated class, owing to their elegant algebraic structure and optimal erasure-correction capability~\cite{RSK}. In recent years, increasing attention has been directed toward constructing MDS codes that are not equivalent to RS codes which we refer to as non-RS MDS codes, which are of both theoretical and practical importance~\cite{Huang}.

There is a close connection between linear MDS codes and \emph{arcs} in projective geometry. Given a linear $[n, k]_q$ MDS code $C$, the columns of a $k \times n$ generator matrix can be regarded as points in the projective space $PG(k-1, q)$. The MDS property ensures that any $k$ columns are linearly independent. So the corresponding point set $\mathscr{A}$ forms a $(k-1)$-arc~\cite[Ch.~11, Sec.~6]{MS}. Moreover, $C$ is equivalent to a Reed-Solomon code if and only if $\mathscr{A}$ lies on a normal rational curve in $PG(k-1, q)$. Theoretical advances in $q$-clan geometries~\cite{COP,PPP} have yielded numerous examples of arcs that do not lie on normal rational curves; these give rise to MDS codes that are not equivalent to Reed-Solomon codes. For instance, when $k = 3$ and $q$ is even, one such arc is given by $\mathscr{A} = \left\{ (1, x, x^\sigma) : x \in \mathbb{F}_q \right\} \cup \left\{ (0,1,0), (0,0,1) \right\}$, where $\sigma$ is a nontrivial automorphism of $\mathbb{F}_q$.

Several explicit constructions of non RS MDS codes have been proposed. One early example is the Roth-Lempel codes~\cite{RL}, based on carefully chosen subsets of $\mathbb{F}_q$. A more recent family is the twisted Reed-Solomon codes~\cite{BPR}, which are provably inequivalent to generalized Reed-Solomon (GRS) codes and have inspired many follow-up studies~\cite{GLLS,N}. Algebraic geometry has further contributed to this line of research~\cite{C}, leading to new classes of cyclic MDS codes~\cite{LCCN}. More recently, Zhu and Zhao~\cite{ZZ} investigated a class of linear codes and established conditions under which they are non RS MDS codes.

  Algebraic geometry codes, first introduced by Goppa~\cite{G}, represent a powerful framework in coding theory. In particular, elliptic curves, as a special class of algebraic curves with rich geometric and algebraic structures, have been used to construct high-performance codes such as optimal locally repairable codes~\cite{LMX}, iso-dual MDS codes~\cite{ZZ2}, NMDS codes~\cite{AGS}, and to study deep holes in elliptic codes~\cite{ZW}. Considerable progress has been made in constructing MDS codes from algebraic curves. Walker~\cite{Wa} introduced a new approach to the MDS conjecture. Munuera~\cite{M2} proved it for large $q$ when the genus $g=1$ or $2$, later extended by Chen~\cite{C2} to arbitrary genus for sufficiently large $n$ (see also~\cite{A} for a detailed survey). For elliptic curves in particular, Han and Ren~\cite{HR} showed that MDS codes from elliptic curves can attain length $\frac{q+1}{2} + \lfloor \sqrt{q} \rfloor$ in some cases, extending Munuera's result and improving the bound in~\cite{LWZ}. Han~\cite{HR2} further proved that their maximal length is asymptotically close to $q/2$.

This paper presents new families of MDS codes constructed from elliptic curves with many rational points. The proposed codes achieve parameters that surpass previously known results, and some are provably inequivalent to Reed-Solomon codes, thereby offering valuable insights into longstanding open problems  concerning MDS algebraic geometry codes. To underscore the originality of our approach, Table~1 summarizes the obtained constructions and compares them with representative existing results. The remainder of the paper is organized as follows. Section~2 reviews the necessary preliminaries on elliptic curves and algebraic geometry codes. Section~3 presents the proposed constructions of MDS codes, Section~4 establishes their inequivalence to Reed-Solomon codes, and Section~5 concludes with final remarks and potential directions for future research.

\begin{table}[H]
\centering
\renewcommand{\arraystretch}{1.5}
\setlength{\tabcolsep}{10pt}
\caption{Some known MDS codes from elliptic curves}
\begin{tabular}{|c|c|c|c|}
\hline
\textbf{Field size} & \textbf{Maximal length $n$} & \textbf{Dimension $k$} & \textbf{Reference} \\
\hline
$q$ & $  \sqrt{q + 1 + \lfloor 2\sqrt{q}}  \rfloor$ & $3 \le k \le \frac{\sqrt{q + 1 + \lfloor 2\sqrt{q} \rfloor }}{2} $ & \cite{C} \\
\hline
$q^2$, $q$ odd & $ \frac{q^2 + 1}{2} + q$ & $k$ odd, $3 \le k \le \frac{q^2 + 1 - 2q}{10}$ & \cite{HR} \\
\hline
$q$ odd & $\frac{q + 1}{2}+\left\lfloor \sqrt{q} \right\rfloor$ & $k \le n$ & Theorem \ref{main result}, Corollary \ref{co2n} \\
\hline
\makecell[c]{$q = 2^{2k}$ or \\ $q = 2^{2k+1}$, $\lfloor 2^{k+1}\sqrt{2} \rfloor$ even} & $\frac{q}{2} + \left\lfloor \sqrt{q} \right\rfloor$ & $k \le n$ & Theorem \ref{main result}, Corollary \ref{co2n} \\
\hline
$q = 2^{2k+1}$, $\lfloor 2^{k+1}\sqrt{2} \rfloor$ odd& $\frac{q + 1 + \left\lfloor 2\sqrt{q} \right\rfloor}{2}$ & $k \le n$ & Theorem \ref{main result}, Corollary \ref{co2n} \\
\hline

\end{tabular}
\end{table}

\section{Preliminary}

Let $\mathbb{F}_q$ denote the finite field with $q = p^m$, where $p$ is a prime. Let $\mathbb{P}$ denote the corresponding projective space over $\mathbb{F}_q$. For convenience, we will regard an affine variety together with its projective closure, and thus will not distinguish between affine and projective curves. In this sense, an affine curve may be viewed as lying in $\mathbb{P}^n$.

In this section, we recall some basic concepts on elliptic curves and on algebraic geometry (AG) codes constructed from algebraic function fields.

\subsection{Elliptic curves}

In this part, we give a brief introduction to elliptic curves. For more details, see \cite{Si}.

\begin{definition}
An elliptic curve over $\mathbb{F}_q$ is a smooth, projective, irreducible curve of genus $1$ defined over $\mathbb{F}_q$, together with a specified $\mathbb{F}_q$-rational point $O$, called the point at infinity.
\end{definition}

Every elliptic curve over $\mathbb{F}_q$ can be written in the Weierstrass form
$$
y^2 + a_1 xy + a_3 y
= x^3 + a_2 x^2 + a_4 x + a_6,
$$
where $a_1,a_2,a_3,a_4,a_6 \in \mathbb{F}_q$ and the discriminant
$$
\Delta = -b_2^2 b_8 - 8 b_4^3 - 27 b_6^2 + 9 b_2 b_4 b_6 \ne 0,
$$
where
$$
\left\{
\begin{aligned}
b_2 &= a_1^2 + 4a_2,\\
b_4 &= 2a_4 + a_1 a_3,\\
b_6 &= a_3^2 + 4a_6,\\
b_8 &= a_1^2 a_6 + 4a_2 a_6 - a_1 a_3 a_4 + a_2 a_3^2 - a_4^2.
\end{aligned}
\right.
$$

When $\operatorname{char}(\mathbb{F}_q) \ne 2,3$, the equation can be simplified to the simpler form
$$
y^2 = x^3 + ax + b,
$$
where $a,b \in \mathbb{F}_q$ and $4a^3 + 27b^2 \ne 0$.

Let $E$ denote the set of points $(x,y)$ satisfying the above equation
together with the point at infinity $O=[0:1:0]$ in the projective plane
$\mathbb{P}^2$.

An important property of elliptic curves is that a group law can be defined on them.

\begin{definition}
Let ${\rm Div}(E)$ denote the divisor group of $E$, i.e., the free abelian group generated by all places of $E$.
Let ${\rm Prin}(E)$ be the subgroup of principal divisors
$$
{\rm Prin}(E)=\{(f): 0 \ne f\in \mathbb{F}_q(E)\},
$$
and let ${\rm Div}^0(E)$ be the subgroup of divisors with degree $0$.
The divisor class group of degree $0$ is defined as
$$
Cl^0(E) = {\rm Div}^0(E)/{\rm Prin}(E).
$$

Since $E$ is a smooth projective curve, there is a natural bijection between the set of $\mathbb{F}_q$-rational points $E(\mathbb{F}_q)$ and the set of rational places (prime divisors of degree $1$).

Fix the distinguished point $O \in E(\mathbb{F}_q)$. Define
$$
\Phi : E(\mathbb{F}_q) \longrightarrow Cl^0(E), \qquad
P \longmapsto [P - O].
$$
For $P,Q \in E(\mathbb{F}_q)$, define
$$
P \oplus Q = R
\quad \text{if and only if} \quad
[P + Q - 2O] = [R - O].
$$
\end{definition}

In the following, the notations are defined as:
\begin{itemize}
\item The summation on $E$ is denoted by $\bigoplus$, and the inverse of any point $P \in E(\F_q)$ is denoted by $\ominus P$.
\item For a nonnegative integer $k$, $[k]P = \underbrace{P \bigoplus \cdots \bigoplus P}_{k \text{ times}}$.
\item Denote $P_1\ominus P_2=P_1\oplus(\ominus P_2)$.
\item For convenience, given two divisors $D_1=\sum\limits_{i} n_iP_i$ and $D_2=\sum\limits_{j} n_jP^\prime _j$, we define
$$
D_1\bigoplus D_2=\left(\bigoplus_i[n_i]P_i\right)\bigoplus \left(\bigoplus \limits_j[n_j]P^\prime _j\right).
$$
\end{itemize}
All operations above are taken under the group law of points on the elliptic curve. The group structure gives a criterion of principal divisors.
\begin{lemma}[\cite{Si}, Corollary III.3.5]\label{PD}
Let $E$ be an elliptic curve and let $D = \sum\limits n_P P$ (here $n_P=0$ for almost all $P$). Then $D$ is a principal divisor if and only if
$$
\sum n_P = 0
\qquad \text{and} \qquad
\bigoplus [n_P] P = O.
$$
Here the first sum is taken over integers, while the second is addition in the group law of $E$.
\end{lemma}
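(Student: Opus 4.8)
The plan is to reduce the statement to the fundamental bijectivity of the map $\varPhi$ of Definition 2 (taking the base point $P_0 = O$), combined with the defining property of the group law $\bigoplus$. Two elementary observations come first. A principal divisor is the divisor of a rational function and therefore has degree $0$, which shows that $\sum_P n_P = 0$ is necessary and that the remaining condition only needs to be checked for divisors of degree $0$; for the converse direction we likewise assume $\deg D = 0$ from the outset.

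So suppose $\deg D = \sum_P n_P = 0$. The key computation is that in $Cl^0(\chi)$ one has
$$
\Big[\sum_P n_P\, P\Big] \;=\; \Big[\sum_P n_P\, (P - O)\Big] \;=\; \sum_P n_P\, \varPhi(P),
$$
the first equality holding because $\sum_P n_P = 0$ makes the correction term $\big(\sum_P n_P\big)\, O$ the zero divisor. Since $\varPhi$ is, by the very definition of $\bigoplus$, a group isomorphism $E(\F_q) \to Cl^0(\chi)$, applying $\varPhi^{-1}$ converts $\sum_P n_P\, \varPhi(P)$ into $\bigoplus_P [n_P]\, P$. Hence $D$ is principal if and only if $[D] = 0$ in $Cl^0(\chi)$, if and only if $\bigoplus_P [n_P]\, P = \varPhi^{-1}(0) = O$; together with $\sum_P n_P = 0$ this is exactly the assertion.

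The step carrying all the geometric weight --- and the main obstacle if one wants a self-contained proof --- is the bijectivity of $\varPhi$, which is where genus $1$ enters through Riemann--Roch. For surjectivity, given $A \in {\rm Div}^0(\chi)$ the divisor $A + O$ has degree $1 > 2g - 2 = 0$, so $\ell(A + O) = 1$; the unique effective divisor in the class of $A + O$ is then a single rational point $P$, giving $A \sim P - O$, that is $[A] = \varPhi(P)$. For injectivity, $\varPhi(P) = \varPhi(Q)$ forces $P - Q$ to be principal, and if $P \neq Q$ the corresponding function would have exactly one simple pole, making $\chi \cong \mathbb{P}^1$ and contradicting $g = 1$. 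All of this is classical and in fact already underlies Definition 2, so in the paper one simply cites \cite{Si}; the work above is then just the bookkeeping translation of the previous paragraph.
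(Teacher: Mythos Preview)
Your argument is correct and is essentially the standard proof from \cite{Si}; note, however, that the paper does not give its own proof of this lemma but simply cites it as Corollary~III.3.5 of Silverman, so there is nothing further to compare.
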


\subsection{Algebraic geometry codes from elliptic curves}

In this section, we provide an introduction to algebraic function fields. For more detailed information, see \cite{St}.

Let $E$ denote an elliptic curve, and let $\mathbb{F}_q(E)$ be the function field of $E$ over $\mathbb{F}_q$. Since $E$ is absolutely irreducible, the constant field of $\mathbb{F}_q(E)$ is $\mathbb{F}_q$. Let $E(\mathbb{F}_q)$ denote the set of $\mathbb{F}_q$-rational points on $E$, and let $N = |E(\mathbb{F}_q)|$ be the number of these points. Then the Hasse-Weil-Serre bound states that
$$
|N - (q+1)| \leq \lfloor 2 \sqrt{q} \rfloor.
$$

The algebraic geometry code (or AG code, for short) associated with $E$ is defined as follows. Let $P_1, \ldots, P_n$ be pairwise distinct $\F_q$-rational points on $E$ such that $P_i \notin \operatorname{Supp}(G)$, where $G$ is a divisor on $E$. Define
$$
D = P_1 + \cdots + P_n.
$$

Define the Riemmann-Roch space associated to $G$
\[\mathscr{L}(G)=\left\{f\in \mathbb{F}_q(E)| (f)+G\geq 0\right\}\cup \{0\}. \]
Consider the evaluation map
$$
\mathrm{ev}_D : \mathscr{L}(G) \longrightarrow \mathbb{F}_q^n, \quad
\mathrm{ev}_D(f) = (f(P_1), \ldots, f(P_n)) \in \mathbb{F}_q^n,
$$
where the values are interpreted via the natural isomorphism between the residue fields at $\F_q$-rational points on $E$. The image of $\mathscr{L}(G)$ under $\mathrm{ev}_D$ is called the algebraic geometry code and is denoted by $C_{\mathscr{L}}(D, G)$.

In this paper, we only consider those AG codes for which all divisors in $\operatorname{Supp}(G)$ are rational. The following result, which provides a criterion for determining whether $C_{\mathscr{L}}(D, G)$ is MDS, appears on p.~281 of \cite{M}. For the sake of completeness, we provide a short proof.

\begin{lemma}\label{criteria}
Let $n > \deg G = k > 1$. Then $C_{\mathscr{L}}(D, G)$ is an $[n, k, d]$ code with $d = n - k$ or $d = n - k + 1$. More precisely, $d = n - k + 1$ if and only if for any $P_{i_1}, \ldots, P_{i_{k-1}} \in \operatorname{Supp}(D)$, the point
$$
 G \ominus P_{i_1} \ominus \cdots \ominus P_{i_{k-1}} \in \left(E(\mathbb{F}_q) \setminus \operatorname{Supp}(D)\right) \bigcup \left\{P_{i_1}, \dots, P_{i_{k-1}}\right\}.
$$
\end{lemma}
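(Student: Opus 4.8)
The plan is to translate ``$C_{\mathscr L}(D,G)$ is MDS'' into a statement about Riemann--Roch spaces of divisors of degree $0$ and degree $1$ on $\chi$, and then to convert the resulting linear-equivalence conditions into the group law via Lemma~\ref{PD}. First I record the parameters. Since $\chi$ has genus $1$ and $\deg G = k > 1 > 2g-2 = 0$, Riemann--Roch gives $\dim_{\mathbb F_q}\mathscr L(G) = k$, and since $\deg(G-D) = k-n < 0$ we have $\mathscr L(G-D) = \{0\}$, so the map $\mathrm{ev}_D$ is injective; hence $C_{\mathscr L}(D,G)$ has length $n$ and dimension $k$. The standard distance estimate for algebraic geometry codes gives $d \ge n - \deg G = n-k$, while the Singleton bound gives $d \le n-k+1$, so $d \in \{n-k,\,n-k+1\}$, which is the first claim.

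Next I characterize when $d = n-k$. A nonzero codeword of weight $n-k$ comes from a nonzero $f \in \mathscr L(G)$ vanishing at exactly $k$ of the $P_i$; it cannot vanish at $k+1$ of them, since $\mathscr L(G - P_{i_1} - \cdots - P_{i_{k+1}}) = \{0\}$ by the same negative-degree argument. So $d = n-k$ precisely when there is a $k$-subset $\{P_{i_1},\dots,P_{i_k}\} \subseteq \operatorname{Supp}(D)$ with $\mathscr L(G - P_{i_1} - \cdots - P_{i_k}) \neq \{0\}$; as this divisor has degree $0$, that is equivalent to its being principal. Now I fix the first $k-1$ points and set $Q := G \ominus P_{i_1} \ominus \cdots \ominus P_{i_{k-1}} \in E(\mathbb F_q)$. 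The divisor $G - P_{i_1} - \cdots - P_{i_{k-1}} - Q$ has degree $0$ and group-law sum $O$, hence is principal by Lemma~\ref{PD}; therefore $G - P_{i_1} - \cdots - P_{i_k}$ is principal if and only if $Q - P_{i_k}$ is principal, and a degree-$0$ divisor $Q - P_{i_k}$ is principal if and only if $Q = P_{i_k}$, because on a genus-$1$ curve two distinct rational points are never linearly equivalent (otherwise a function with a single simple zero and a single simple pole would realize the curve as rational).

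Assembling these equivalences, $d = n-k$ holds if and only if there are $k-1$ distinct places $P_{i_1},\dots,P_{i_{k-1}} \in \operatorname{Supp}(D)$ together with a further place $P_{i_k} \in \operatorname{Supp}(D) \setminus \{P_{i_1},\dots,P_{i_{k-1}}\}$ satisfying $G \ominus P_{i_1} \ominus \cdots \ominus P_{i_{k-1}} = P_{i_k}$; negating this yields exactly the stated criterion for $d = n-k+1$, namely that for every choice of $P_{i_1},\dots,P_{i_{k-1}}$ the point $Q$ lies in $E(\mathbb F_q)\setminus\operatorname{Supp}(D)$ or equals one of the $P_{i_j}$ (here one uses that $Q$ is automatically a rational point, so $Q \notin \operatorname{Supp}(D)$ is the same as $Q \in E(\mathbb F_q)\setminus\operatorname{Supp}(D)$). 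The whole proof is short; the only real ingredients are the two Riemann--Roch computations, the degree-$0$ principality criterion, the genus-$1$ fact that distinct points are inequivalent, and the principal-divisor/group-law dictionary of Lemma~\ref{PD}. The one point that needs care --- and the easiest place to slip --- is tracking which auxiliary divisor has degree $0$ and which has degree $1$, so that Lemma~\ref{PD} and the principality criteria are each applied to a divisor of the right degree.
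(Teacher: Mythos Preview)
Your proof is correct and follows essentially the same approach as the paper: both reduce the MDS condition to the vanishing of $\mathscr L\bigl(G-\sum_{t=1}^{k}P_{i_t}\bigr)$ for every $k$-subset, and then identify the obstruction with a single rational point $Q$ determined by $G$ and the chosen $k-1$ points. The only cosmetic difference is that the paper extracts $Q$ by writing the divisor of a nonzero $f\in\mathscr L\bigl(G-\sum_{t=1}^{k-1}P_{i_t}\bigr)$ explicitly, whereas you define $Q$ via the group law and invoke Lemma~\ref{PD} directly; your route is arguably cleaner since it makes the use of Lemma~\ref{PD} explicit.
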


\begin{proof}
By definition and the Riemann-Roch Theorem (note that the genus of an elliptic curve is $g=1$), we have
$$
\dim C_{\mathscr{L}}(D,G) = \ell(G) - \ell(G-D) = k.
$$
The minimum distance $d$ can be characterized in the following:
\begin{itemize}
    \item there exist $n-d$ distinct points $P_{i_1},\ldots,P_{i_{n-d}}\in\operatorname{Supp}(D)$ such that $\mathscr{L}\left(G-\sum\limits_{t=1}^{n-d}P_{i_t}\right)\neq\{0\}$;
    \item for any $n-d+1$ distinct points $P_{i_1},\ldots,P_{i_{n-d+1}}\in\operatorname{Supp}(D)$, we have $\mathscr{L}\left(G-\sum\limits_{t=1}^{n-d+1}P_{i_t}\right)=\{0\}$.
\end{itemize}
Hence $d=n-k+1$ if and only if for any $k$ distinct points $P_{i_1},\ldots,P_{i_k}\in\operatorname{Supp}(D)$,
$$
\ell\left(G-\sum_{t=1}^k P_{i_t}\right)=0.
$$

Consider any $k-1$ points $P_{i_1},\ldots,P_{i_{k-1}}\in\operatorname{Supp}(D)$ and any nonzero $f\in\mathscr{L}(G-\sum\limits_{t=1}^{k-1}P_{i_t})$.
Since $\deg(f)=0$, its divisor can be written as
$$
(f)=P_{i_1}+\cdots+P_{i_{k-1}}-G+Q.
$$
Here, $Q$ is some rational point on $E$. Precisely, by Lemma \ref{PD},
\[Q=G\ominus P_{i_1}\ominus \cdots \ominus P_{i_{k-1}}.\]

We will show both necessity and sufficiency.
\begin{itemize}
    \item[(1)] Assume $Q\in\operatorname{Supp}(D)\setminus\{P_{i_1},\ldots,P_{i_{k-1}}\}$.  Then $(f)=\sum\limits_{t=1}^{k-1}P_{i_t}-G+Q$ and the codeword $\mathrm{ev}_D(f)$ has weight $n-k$, contradicting the assumption $d=n-k+1$.
    \item[(2)] If $Q\notin\operatorname{Supp}(D)$ or $Q\in\{P_{i_1},\ldots,P_{i_{k-1}}\}$, then  $\mathrm{ev}_D(f)$ has exactly $k-1$ zero components, and thus its weight is $n-k+1$.
\end{itemize}
Therefore, the minimum distance of $C_{\mathscr{L}}(D,G)$ is $d=n-k+1$ which  completes  the proof.
\end{proof}

For $q = p^m$ with $p$ prime, and given $|\beta| < 2\sqrt{q}$, Waterhouse \cite{W} proved that there exists an elliptic curve over $\mathbb{F}_q$ with $q + 1 - \beta$ rational points if and only if one of the following conditions holds:
\begin{itemize}
  \item[(1)] $p \nmid \beta$;
  \item[(2)] $\beta = 0$, with $m$ odd or $p \not\equiv 1 \pmod{4}$;
  \item[(3)] $\beta = \pm \sqrt{q}$, with $m$ even or $p \not\equiv 1 \pmod{3}$;
  \item[(4)] $\beta = \pm 2\sqrt{q}$, with $m$ even;
  \item[(5)] $\beta = \pm \sqrt{2q}$, with $m$ odd and $p = 2$;
  \item[(6)] $\beta = \pm \sqrt{3q}$, with $m$ odd and $p = 3$.
\end{itemize}

Let $MEC(k,q)$ denote the maximal length of a non-trivial $q$-ary MDS elliptic code of dimension $k$. Upper bounds on $MEC(k,q)$ have been derived for certain ranges of $k$, as summarized in the following lemmas.

\begin{lemma}\cite{HR}\label{HR1}
Let $C$ be an $[n,k]$ MDS code arising from an elliptic curve $E$ over $\mathbb{F}_q$. If $q \ge 289$ and $3 \le k \le \frac{|E(\mathbb{F}_q)|}{10}$, then
$$
n \le \frac{|E(\mathbb{F}_q)|}{2}.
$$
In particular, when $3 \le k \le \frac{q+1-2\sqrt{q}}{10}$, we have
$$
MEC(k,q) \le \frac{q+1}{2} + \sqrt{q}. \qquad 
$$
\end{lemma}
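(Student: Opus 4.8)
The plan is to translate the MDS condition, via Lemma~\ref{criteria} and Lemma~\ref{PD}, into a statement about restricted sumsets in the group $E(\mathbb{F}_q)$, and then to establish the resulting additive-combinatorial fact through Cauchy--Davenport / Erd\H{o}s--Heilbronn type estimates together with Kneser-type structural results. First I would fix the dictionary: put $A:=\operatorname{Supp}(D)$ (so $|A|=n$), let $g\in E(\mathbb{F}_q)$ be the image of $G=\sum_P n_P P$ under the group-sum, $g=\bigoplus_P[n_P]P$, and write $\Sigma_k(A)\subseteq E(\mathbb{F}_q)$ for the set of group-sums of $k$ pairwise distinct elements of $A$. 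By Lemma~\ref{criteria} together with Lemma~\ref{PD}, the code $C_{\mathscr{L}}(D,G)$ is MDS if and only if $g\notin\Sigma_k(A)$: if some $k$-subset $T\subseteq A$ has $\bigoplus_{P\in T}P=g$, then choosing any $R\in T$ and $S=T\setminus\{R\}$ makes the point $Q=G\ominus\bigoplus_{P\in S}P$ of Lemma~\ref{criteria} equal to $R\in\operatorname{Supp}(D)\setminus S$, so the criterion fails; conversely, a failure $Q=G\ominus\bigoplus_{P\in S}P\in\operatorname{Supp}(D)\setminus S$ for a $(k-1)$-subset $S$ yields the $k$-subset $S\cup\{Q\}\subseteq A$ with group-sum $g$.

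The lemma now reduces to the group-theoretic assertion: \emph{if $A\subseteq E(\mathbb{F}_q)$ with $|A|>N/2$, where $N=|E(\mathbb{F}_q)|$, and $q\ge 289$ and $3\le k\le N/10$, then $\Sigma_k(A)=E(\mathbb{F}_q)$.} Granting this, an MDS code of length $n>N/2$ would require $g\notin\Sigma_k(A)=E(\mathbb{F}_q)$, which is absurd, so $n\le N/2$. (Note $n>N/2>N/10\ge k$, so $n>k$ as needed in Lemma~\ref{criteria}.)

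The core is the group-theoretic assertion, and the structure $E(\mathbb{F}_q)\cong\mathbb{Z}_{n_1}\oplus\mathbb{Z}_{n_2}$ with $n_1\mid n_2$ and $n_1\mid q-1$ (hence $n_1\le\sqrt N$) governs the argument. When $N$ is prime, so $E(\mathbb{F}_q)$ is cyclic of prime order, the Dias da Silva--Hamidoune theorem gives $|\Sigma_k(A)|\ge\min\{N,\ k|A|-k^2+1\}$; since $|A|>N/2$ makes $k|A|-k^2+1>k(N/2-k)+1$, and $k\mapsto k(N/2-k)$ is increasing on $[3,N/10]$ with value $3N/2-9$ at $k=3$, which is $\ge N-1$ once $N\ge 16$, one gets $|\Sigma_k(A)|=N$; and $q\ge 289$ forces $N\ge q+1-2\sqrt q\ge 256$, leaving ample room. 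This is exactly where the hypothesis $k\le N/10$ is used: it keeps the quadratic $k(N/2-k)$ above $N$. For a general $E(\mathbb{F}_q)$ I would induct on $N$: if $\Sigma_k(A)\neq E(\mathbb{F}_q)$, a Kneser-type structure theorem for restricted sumsets forces $\Sigma_k(A)$, and hence (essentially) $A$ itself, to be periodic modulo some nontrivial subgroup $H$; passing to $\overline A\subseteq E(\mathbb{F}_q)/H$ one has $|\overline A|>|E(\mathbb{F}_q)/H|/2$ and either applies the inductive hypothesis there --- the bound $n_1\le\sqrt N$ together with $N\ge 256$ keeping the quotient large enough --- or, for the few residual small quotients, finishes by an explicit count of $k$-subsets of $A$ with a prescribed group-sum, using that $k$ is small relative to $|H|$.

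Finally, the ``in particular'' clause follows by combining the first part with the Hasse--Weil--Serre bound $N\le q+1+\lfloor 2\sqrt q\rfloor\le q+1+2\sqrt q$, so that $N/2\le\frac{q+1}{2}+\sqrt q$ for every curve, giving $MEC(k,q)\le\frac{q+1}{2}+\sqrt q$ in the stated range. The step I expect to be the main obstacle is the general (non-cyclic, or composite-order cyclic) case of the group-theoretic assertion: the assertion is sharp --- for a curve with a subgroup $H$ of index $2$, taking $A=H$ gives $\Sigma_k(A)=H\neq E(\mathbb{F}_q)$ and an MDS code of length $N/2$, so $|A|=N/2$ does not suffice --- hence one cannot avoid the structural input, and naive double counting only yields $n\le N/2+O(k)$. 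The delicate work is to control the subgroup $H$ produced by Kneser's theorem, to verify that the successive quotients $E(\mathbb{F}_q)/H$ remain large enough for the restricted-sumset estimates to keep closing, and to check that one can always lift a coset-level solution back to $k$ genuinely distinct points of $A$.
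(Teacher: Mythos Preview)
The paper does not prove this lemma at all: it is quoted verbatim from \cite{HR} and closed with a $\square$, so there is no ``paper's own proof'' to compare against. What can be said is that your reduction is exactly the right one and is the starting point of Han--Ren's argument in \cite{HR}: via Lemma~\ref{criteria} and Lemma~\ref{PD}, the MDS property of $C_{\mathscr{L}}(D,G)$ is equivalent to $g\notin\Sigma_k(A)$ with $A=\operatorname{Supp}(D)$ and $g$ the group-sum of $G$, so the lemma becomes the additive-combinatorial statement that $|A|>N/2$ forces $\Sigma_k(A)=E(\mathbb{F}_q)$ in the stated range of $k$. Your treatment of the prime-order case via Dias~da~Silva--Hamidoune is clean and correct, and your derivation of the ``in particular'' clause from the first part plus Hasse--Weil is fine (note, incidentally, that the bound on $k$ in the paper's ``in particular'' line should read $(q+1-2\sqrt q)/10$, not $/2$, for the deduction to go through).

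The genuine gap you yourself flag is real and is the whole content of \cite{HR}. For a general abelian group of order $N$ there is no Dias~da~Silva--Hamidoune bound with $N$ in place of the smallest prime divisor, and the naive Kneser route you sketch does not close on its own: Kneser's theorem concerns ordinary, not restricted, sumsets, and the passage ``$\Sigma_k(A)$ is $H$-periodic $\Rightarrow$ $A$ is essentially $H$-periodic $\Rightarrow$ induct on $E/H$'' is precisely where the difficulty lies, since lifting a $k$-sum in the quotient back to $k$ \emph{distinct} preimages in $A$ is not automatic and the quotient may be too small for the inductive hypothesis. Han--Ren handle this not by a bare Kneser induction but by combining the two-generator structure $E(\mathbb{F}_q)\cong\mathbb{Z}_{n_1}\oplus\mathbb{Z}_{n_2}$, $n_1\mid n_2$, $n_1\mid q-1$, with sharper restricted-sumset estimates in abelian groups and a careful case analysis on the size of the period; the constants $q\ge 289$ and $k\le N/10$ come out of those estimates. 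So your plan is the correct framework, but the inductive step as written is a sketch of a strategy rather than a proof, and filling it in amounts to reproducing the main theorem of \cite{HR}.
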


\begin{lemma}\cite{HR}\label{HR2}
Under the same notations, let $q$ be an odd square. If $3 \le k \le \frac{q+1-2\sqrt{q}}{10}$ and $k$ is odd, then the upper bound in Lemma~\ref{HR1} is attained, i.e.,
$$
MEC(k,q) = \frac{q+1}{2} + \sqrt{q}. \qquad 
$$
\end{lemma}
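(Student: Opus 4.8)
The upper bound $MEC(k,q)\le \frac{q+1}{2}+\sqrt q$ is already supplied by Lemma~\ref{HR1} (the hypothesis $3\le k\le \frac{q+1-2\sqrt q}{10}$ certainly implies $k\le \frac{q+1-2\sqrt q}{2}$, and for $q$ an odd square the range is non-vacuous only when $q\ge 289$, so Lemma~\ref{HR1} applies). Hence the whole content of the statement is to produce, for every admissible odd $k$, a genuinely MDS elliptic code of length exactly $\frac{q+1}{2}+\sqrt q$. The plan is to build one from a \emph{maximal} elliptic curve together with an index-$2$ subgroup of its group of rational points.

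First I would invoke Waterhouse's theorem. Since $q$ is an odd square we may write $q=p^{m}$ with $p$ odd and $m$ even, so condition~(4) with $\beta=-2\sqrt q$ applies and there is an elliptic curve $E/\mathbb{F}_q$ with $N:=|E(\mathbb{F}_q)|=q+1+2\sqrt q=(\sqrt q+1)^{2}$, the largest value allowed by the Hasse--Weil--Serre bound. As $q$ is odd, $\sqrt q$ is odd, so $N=(\sqrt q+1)^{2}$ is even; therefore the finite abelian group $E(\mathbb{F}_q)$ admits a surjective homomorphism $\varphi\colon E(\mathbb{F}_q)\to \mathbb{Z}/2\mathbb{Z}$. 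Put $H=\ker\varphi$, so $|H|=N/2=\frac{q+1}{2}+\sqrt q=:n$, and set $D=\sum_{P\in H}P$. Fix any point $t\in E(\mathbb{F}_q)\setminus H$ and let $G=k(t)$ be the divisor consisting of $t$ with multiplicity $k$; then $\deg G=k$ and $\operatorname{Supp}(G)=\{t\}$ is disjoint from $\operatorname{Supp}(D)=H$. Since $k\ge 3$ and $k\le \frac{q+1-2\sqrt q}{10}<\frac{q+1}{2}+\sqrt q=n$, the hypotheses $n>\deg G=k>1$ of Lemma~\ref{criteria} hold.

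It then remains to verify the MDS criterion of Lemma~\ref{criteria} for $C_{\mathscr{L}}(D,G)$. Take any $k-1$ distinct points $P_{i_1},\dots,P_{i_{k-1}}\in H$ and let $Q=G\ominus P_{i_1}\ominus\cdots\ominus P_{i_{k-1}}\in E(\mathbb{F}_q)$; by the convention of Section~2 for combining divisors, $G=k(t)$ contributes the point $[k]t$, so $Q=[k]t\ominus P_{i_1}\ominus\cdots\ominus P_{i_{k-1}}$. Applying the homomorphism $\varphi$ and using $\varphi(t)=1$ (as $t\notin\ker\varphi$) together with $\varphi(P_{i_j})=0$ for all $j$, we get $\varphi(Q)=k\,\varphi(t)-\sum_{j=1}^{k-1}\varphi(P_{i_j})\equiv k\pmod 2$. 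Because $k$ is odd, $\varphi(Q)\ne 0$, i.e. $Q\notin H=\operatorname{Supp}(D)$, so the first alternative of Lemma~\ref{criteria} holds for \emph{every} $(k-1)$-subset of $\operatorname{Supp}(D)$. Hence $C_{\mathscr{L}}(D,G)$ is an $[n,k,n-k+1]$ MDS code (non-trivial since $3\le k<n-1$), giving $MEC(k,q)\ge n=\frac{q+1}{2}+\sqrt q$; combined with Lemma~\ref{HR1} this yields the stated equality.

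Once the right objects are in place the argument is short, so the points that need care are: (i) confirming through Waterhouse's classification that an elliptic curve attaining the Hasse--Weil--Serre bound really exists over every odd square $q$, and that its point count is even so an index-$2$ subgroup exists; and (ii) observing that the coset/parity computation makes the criterion of Lemma~\ref{criteria} hold \emph{uniformly} over all $\binom{n}{k-1}$ choices of $P_{i_1},\dots,P_{i_{k-1}}$ — this is exactly where the oddness of $k$ is indispensable, since for even $k$ the point $Q$ can fall back inside $H$ and the criterion would fail for many subsets. (The restriction $k\le\frac{q+1-2\sqrt q}{10}$ and the bound $q\ge 289$ are needed only to import the upper bound from Lemma~\ref{HR1}; the construction above works for every odd $k$ with $3\le k<n$.)
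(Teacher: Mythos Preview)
Your proof is correct. The paper does not actually prove Lemma~\ref{HR2}---it is quoted from \cite{HR} without argument---but your construction is exactly the paper's own Theorem~\ref{main result}(1) specialized to a maximal elliptic curve over an odd square $q$: the same index-$2$ subgroup $H$, the same choice $G=kQ$ with $Q\notin H$, and the same coset/parity verification of the criterion in Lemma~\ref{criteria}.
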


These results indicate that, for certain ranges of $k$ and specific values of $q$, the maximal length of MDS elliptic codes can be explicitly determined. Lemma~\ref{HR2} shows that in the case of odd $k$ and $q$ an odd square, the bound given in Lemma~\ref{HR1} is tight.

\section{Code Construction}

From now on, we fix an elliptic curve $E$ defined over a finite field $\mathbb{F}_q$ with exactly $N$ rational points, where $N$ is even. According to Waterhouse~\cite{W}, the maximal possible even value of $N$ is given by:

\begin{itemize}
  \item[(1)] $N = q + 1 + 2\lfloor \sqrt{q} \rfloor$, if $p$ is odd;
  \item[(2)] $N = q + 2\lfloor \sqrt{q} \rfloor$, if $q = 2^{2k}$ or $q = 2^{2k+1}$ and $\lfloor 2^{k+1}\sqrt{2} \rfloor$ is even;
  \item[(3)] $N = q + 1 + \lfloor 2\sqrt{q} \rfloor$, if $q = 2^{2k+1}$ and $\lfloor 2^{k+1}\sqrt{2} \rfloor$ is odd.
\end{itemize}

Let $H = \{P_1, \dots, P_{N/2}\}$ be a subgroup of $E(\mathbb{F}_q)$ such that $[E(\mathbb{F}_q) : H] = 2$. Then $H$ has order $N/2$.

\begin{theorem}\label{main result}
There exist MDS codes constructed from the elliptic curve $E$ with parameters $[n,k,n-k+1]$, where $n>k>1$, satisfying one of the following conditions:

\begin{itemize}
  \item[(1)] $k$ is odd and $n = N/2$;
  \item[(2)] $k$ is even and $n = N/2 - 1$.
\end{itemize}
\end{theorem}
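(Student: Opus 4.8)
The plan is to produce, for each admissible $k$, an explicit pair $(D,G)$ and to verify the MDS criterion of Lemma~\ref{criteria} by a coset (parity) argument rather than by analyzing all $(k-1)$-subsets of $\operatorname{Supp}(D)$ directly. Since $[E(\mathbb{F}_q):H]=2$, fix any point $R\in E(\mathbb{F}_q)\setminus H$, and let $\pi\colon E(\mathbb{F}_q)\to E(\mathbb{F}_q)/H\cong\mathbb{Z}/2\mathbb{Z}$ be the quotient homomorphism; thus $\pi$ is additive, $\pi(P)=0$ exactly when $P\in H$, and $\pi(R)=1$. For a divisor $G=\sum_P n_P P$ write, as in Lemma~\ref{criteria}, $G$ also for the point $\bigoplus_P[n_P]P\in E(\mathbb{F}_q)$, so that $\pi(G)=\sum_P n_P\,\pi(P)$ makes sense. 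The key observation is: if $\operatorname{Supp}(D)\subseteq H$ and $\pi(G)=1$, then for every $S\subseteq\operatorname{Supp}(D)$ the point $Q:=G\ominus\bigl(\bigoplus_{P\in S}P\bigr)$ satisfies $\pi(Q)=1-0=1$, hence $Q\notin H$ and a fortiori $Q\notin\operatorname{Supp}(D)$. By the first alternative in Lemma~\ref{criteria} this forces $C_{\mathscr{L}}(D,G)$ to be MDS, with no further checking needed.

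For part~(1), let $k$ be odd with $1<k<N/2$, and set $D=\sum_{P\in H}P$ and $G=kR$. Then $n=\deg D=N/2$ and $\deg G=k$, while $\operatorname{Supp}(D)\cap\operatorname{Supp}(G)=\emptyset$ because $R\notin H$; moreover $\dim C_{\mathscr{L}}(D,G)=\ell(G)-\ell(G-D)=k$ since $\deg(G-D)=k-n<0$ and $\deg G=k\ge 2$. As $k$ is odd, $\pi(G)=k\,\pi(R)=1$, so the key observation applies and we obtain an MDS code with parameters $[N/2,\,k,\,N/2-k+1]$.

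For part~(2), let $k$ be even with $1<k<N/2-1$, fix any $P_0\in H$, and set $D=\sum_{P\in H\setminus\{P_0\}}P$ and $G=(k-1)R+P_0$. Then $n=\deg D=N/2-1$, $\deg G=k$, the supports are disjoint (because $R\notin H$ and $P_0\notin\operatorname{Supp}(D)$), and $\dim C_{\mathscr{L}}(D,G)=k$ as before. Here $\pi(G)=(k-1)\,\pi(R)+\pi(P_0)=1$, since $k-1$ is odd and $P_0\in H$; the extra base point $P_0$ is precisely what restores the parity that is lost when $k$ becomes even, at the cost of one unit of length. Lemma~\ref{criteria} again yields an MDS code, now with parameters $[N/2-1,\,k,\,N/2-k]$.

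I do not expect a genuine obstacle: everything reduces to checking the hypotheses of Lemma~\ref{criteria} and a one-line Riemann-Roch count, and the substance of the theorem is the coset observation above, which trivializes the otherwise delicate verification of the MDS property once $\operatorname{Supp}(D)$ is placed inside the index-$2$ subgroup $H$. The only point requiring attention is the parity bookkeeping for $\pi(G)$, which is why an even $k$ needs the additional point $P_0\in H$ in $\operatorname{Supp}(G)$ and hence forces $n=N/2-1$ rather than $N/2$.
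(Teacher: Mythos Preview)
Your proof is correct and follows essentially the same approach as the paper: place $\operatorname{Supp}(D)$ inside the index-$2$ subgroup $H$, choose $G$ so that its group-law sum lies in the nontrivial coset, and invoke Lemma~\ref{criteria}. The only cosmetic difference is in case~(2), where the paper takes $G=(k-1)P_0+R$ (multiplicity on the point in $H$) while you take $G=(k-1)R+P_0$ (multiplicity on the point outside $H$); both choices give $\pi(G)=1$ and the argument goes through identically.
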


\begin{proof}
We distinguish two cases depending on the parity of $ k $.

\begin{itemize}
  \item[(1)] Suppose $ k $ is odd. Let
  $$
  D = P_1 + \cdots + P_{N/2}, \qquad G = kQ,
  $$
  where $ Q \in E(\mathbb{F}_q) \setminus H $. Since $ k $ is odd, $ Q \notin H $, and $ H $ is a subgroup of index 2 in $ E(\mathbb{F}_q) $, it follows that $ [k]Q \notin H $. Consequently, for any subset $ \{P_{i_1}, \ldots, P_{i_{k-1}}\} \subset \operatorname{supp}(D) $, we have
  $$
  [k]Q \ominus P_{i_1} \ominus \cdots \ominus P_{i_{k-1}} \in E(\mathbb{F}_q) \setminus H = E(\mathbb{F}_q) \setminus \operatorname{supp}(D).
  $$
  By Lemma~\ref{criteria}, the code $ C_{\mathscr{L}}(D, G) $ is MDS with parameters $ [N/2, k, N/2 - k + 1] $.

  \item[(2)] Now suppose $ k $ is even. Choose $ P \in H $ and $ Q \in E(\mathbb{F}_q) \setminus H $, and define $ D $ as the sum of all elements in $ H $ except for $ P $. Let
  $$
  G = (k-1)P + Q.
  $$
  Since $ k $ is even, $ P \in H $, and $ Q \notin H $, it follows that
  $$
  [k-1]P \oplus Q \notin H.
  $$
  Then, for any $ P_{i_1}, \ldots, P_{i_{k-1}} \in \operatorname{supp}(D) $, we have
  $$
  [k-1]P \oplus Q \ominus P_{i_1} \ominus \cdots \ominus P_{i_{k-1}} \in E(\mathbb{F}_q) \setminus H \subset E(\mathbb{F}_q) \setminus \operatorname{supp}(D).
  $$
  By Lemma~\ref{criteria}, the code $ C_{\mathscr{L}}(D, G) $ is MDS with parameters $ [N/2 - 1, k, N/2 - k] $.
\end{itemize}
\end{proof}

\noindent
In analogy with extended Reed-Solomon codes, we may extend the above MDS codes arising from elliptic curves when $ k $ is even. Let $ v_P $ denote the discrete valuation at a rational point $ P $.

\begin{corollary}\label{co2n}

There exist MDS codes constructed from the elliptic curve $E$ with parameters $ [N/2, k, N/2 - k + 1] $ for even $ k $.
\end{corollary}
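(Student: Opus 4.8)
The plan is to extend the code $C_{\mathscr{L}}(D,G)$ of Theorem~\ref{main result}(2) by one coordinate, exactly as one passes from a Reed--Solomon code to its extended version. Recall $D=\sum_{P_i\in H\setminus\{P\}}P_i$ and $G=(k-1)P+Q$ with $P\in H$, $Q\in E(\mathbb{F}_q)\setminus H$. Fix a local uniformizer $t=t_P$ at $P$. Since every $f\in\mathscr{L}(G)$ has $v_P(f)\ge -(k-1)$, the function $t^{k-1}f$ is regular at $P$, so $f\mapsto (t^{k-1}f)(P)\in\mathbb{F}_q$ is well defined and $\mathbb{F}_q$-linear. I would set
$$
\widehat{\mathrm{ev}}(f)=\bigl(f(P_1),\dots,f(P_{N/2-1}),(t^{k-1}f)(P)\bigr)\in\mathbb{F}_q^{N/2},
$$
and let $\widehat C$ be the image of $\mathscr{L}(G)$; a different choice of $t$ only rescales the last coordinate, so $\widehat C$ is well defined up to monomial equivalence. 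Because $\widehat{\mathrm{ev}}$ restricts to the injective map $\mathrm{ev}_D$ on the first $N/2-1$ coordinates, it is injective and $\dim\widehat C=k$. It then remains to show that every $k$ columns of a generator matrix $M$ of $\widehat C$ are linearly independent, which yields $d(\widehat C)=N/2-k+1$ and the MDS property.

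Fix a basis $f_1,\dots,f_k$ of $\mathscr{L}(G)$, so the columns of $M$ are indexed by $P_1,\dots,P_{N/2-1}$ together with one extra column coming from the appended coordinate. I would split into two cases. If the chosen $k$ columns are $P_{i_1},\dots,P_{i_k}\in\operatorname{supp}(D)$, a linear dependence means some nonzero $f\in\mathscr{L}(G)$ vanishes at all of them, i.e. $\ell\bigl(G-\sum_{t=1}^{k}P_{i_t}\bigr)>0$. The divisor $G-\sum_{t=1}^k P_{i_t}$ has degree $0$, so Lemma~\ref{PD} forces $[k-1]P\oplus Q=P_{i_1}\oplus\cdots\oplus P_{i_k}$. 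The left side lies outside $H$ (as shown in the proof of Theorem~\ref{main result}, using that $k$ is even, $P\in H$, $Q\notin H$, $[E(\mathbb{F}_q):H]=2$), while the right side lies in $H$, a contradiction.

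If instead the extra column is among the chosen $k$, with the remaining ones indexed by $P_{i_1},\dots,P_{i_{k-1}}\in\operatorname{supp}(D)$, a dependence corresponds to a nonzero $f\in\mathscr{L}(G)$ with $f(P_{i_1})=\cdots=f(P_{i_{k-1}})=0$ and $(t^{k-1}f)(P)=0$. The last equation says $v_P(f)\ge -(k-2)$, that is $f\in\mathscr{L}(G-P)=\mathscr{L}\bigl((k-2)P+Q\bigr)$; hence a dependence exists iff $\ell\bigl((k-2)P+Q-\sum_{t=1}^{k-1}P_{i_t}\bigr)>0$. This divisor again has degree $0$, so Lemma~\ref{PD} forces $[k-2]P\oplus Q=P_{i_1}\oplus\cdots\oplus P_{i_{k-1}}$. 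Since $k-2$ is even, $[k-2]P\in H$ and $Q\notin H$ give $[k-2]P\oplus Q\notin H$, again contradicting that the right side lies in $H$. Therefore no $k$ columns of $M$ are dependent and $\widehat C$ has parameters $[N/2,k,N/2-k+1]$.

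The one point needing care is the second case: one must recognise that vanishing of the appended coordinate $(t^{k-1}f)(P)$ is precisely the condition $f\in\mathscr{L}(G-P)$, i.e. a reduction of the allowed pole order at $P$ by one, and then check that the resulting group-law obstruction $[k-2]P\oplus Q$ still escapes $H$, which is where evenness of $k$ re-enters. Everything else is routine bookkeeping with divisor degrees and Lemma~\ref{PD}, together with the injectivity argument that pins down $\dim\widehat C=k$.
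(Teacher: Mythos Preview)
Your proof is correct and in fact takes a somewhat different route from the paper's. Both arguments extend the $[N/2-1,k]$ code $C_{\mathscr{L}}(D,G)$ of Theorem~\ref{main result}(2) by one extra coordinate, but the paper appends the coefficient of $g$ in the chosen basis $\{1,f_2,\dots,f_{k-1},g\}$---equivalently, the leading Laurent coefficient of $f$ at $Q$---whereas you append the leading Laurent coefficient at $P$, namely $(t_P^{\,k-1}f)(P)$. This distinction matters for the verification. Vanishing of the paper's extra coordinate confines $f$ to $\mathscr{L}\bigl((k-1)P\bigr)$, so in the mixed case one must rule out a principal divisor $(k-1)P-\sum_{t=1}^{k-1}P_{i_t}$, i.e.\ the relation $[k-1]P=P_{i_1}\oplus\cdots\oplus P_{i_{k-1}}$ with every term lying in $H$; the index-$2$ coset argument gives no obstruction here, and the paper does not supply an alternative one. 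Your choice instead confines $f$ to $\mathscr{L}\bigl((k-2)P+Q\bigr)$, and the resulting relation $[k-2]P\oplus Q=\bigoplus_t P_{i_t}$ is blocked by exactly the same $H$-coset parity used in Theorem~\ref{main result} (since $k-2$ is even and $Q\notin H$). So extending at $P$ rather than at $Q$ lets the MDS check go through with the tools already on the table, and your treatment of both cases via Lemma~\ref{PD} is clean and complete.
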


\begin{proof}

  Notation as in the proof of Theorem \ref{main result}(ii). Here $G=(k-1)P+Q$ with $P\in H$ and $Q\not\in H$.  Choose $f_i\in \mathscr{L}(G)$ ($i=2,\cdots, k-1$) and $g\in \mathscr{L}(G)$ such that
  \begin{equation}\label{basis of Q}
  v_{P}(f_i)=-i, v_{Q}(f_i)=0,  v_{P}(g)=-1, v_Q(g)=-1.
  \end{equation}
  Then $\{1, f_2, \cdots, f_{k-1}, g\}$ is a basis of the Riemann-Roch space $\mathscr{L}(G)$.
  The code $C_{\mathscr{L}}(D,G)$ has a generator matrix
  \begin{equation}\label{gen mat2}
  M(D, G)=
  \left(
  \begin{matrix}
    1 & 1 & \cdots & 1  \\
    f_2(P_1) & f_2(P_2) & \cdots  & f_2(P_n) \\
    f_3(P_1) & f_3(P_2) & \cdots  & f_3(P_n) \\
    \vdots & \vdots & \vdots & \vdots  \\
    f_{k-1}(P_1) & f_{k-1}(P_2) & \cdots  & f_{k-1}(P_n) \\
    g(P_1) & g(P_2) & \cdots  & g(P_n)
  \end{matrix}
  \right)
  \end{equation}
  with $n=N/2-1$.
  Now we define
  \begin{equation}\label{generator of ext}
  M_{ext}(D, G)=
  \begin{pmatrix}
    1 & 1 & \cdots & 1 & 0  \\
    f_2(P_1) & f_2(P_2) & \cdots  & f_2(P_n) & 0 \\
    f_3(P_1) & f_3(P_2) & \cdots  & f_3(P_n) & 0 \\
    \vdots & \vdots & \vdots & \vdots & \vdots \\
    f_{k-1}(P_1) & f_{k-1}(P_2) & \cdots  & f_{k-1}(P_n) & 0 \\
    g(P_1) & g(P_2) & \cdots  & g(P_n) & 1 \\
  \end{pmatrix}.
\end{equation}

  Recall that when $k$ is even, we set $G=(k-1)P+Q$.  The functions $1,f_2,\ldots,f_{k-1}$ form a basis of $\mathscr{L}((k-1)P)$. Hence any linear combination of the first $k-1$ rows of $M_{ext}(D,G)$ can be obtained by appending a $0$ to the end of some codeword  in $C_{\mathscr{L}}(D,G')$, where $G'=(k-1)P$. Since $k-1$ is odd and $P\notin H$, Theorem~\ref{main result} implies that $C_{\mathscr{L}}(D,G')$ is MDS. Consequently, the code $C_{ext}(G,D)$ generated by $M_{ext}(D,G)$ has parameters
  $$
  [N/2,k,N/2-k+1]
  $$
  with $k$ even.
\end{proof}

Compared with the corresponding theorem in \cite{C}, our construction produces longer MDS codes from elliptic curves. More specifically, we construct MDS codes from elliptic curves whose lengths attain $\frac{q+1 + \lfloor 2\sqrt{q} \rfloor}{2}$. These results support the upper bound $\frac{q+1}{2} + \lfloor \sqrt{q} \rfloor + k$ for the length of such codes, as proposed in \cite{M2}. Furthermore, we will show in the following section that these codes are not equivalent to Reed-Solomon codes.

Next, we provide some examples to demonstrate that our construction can yield MDS codes from elliptic curves. The first two are cases for $p = 2$ and $p = 3$, where the elliptic curves cannot be expressed in the form $y^2 = x^3 + ax + b$.

\begin{example}{(For the case $ p = 2 $)}
  Consider an elliptic curve defined over $\mathbb{F}_8$ by the Weierstrass equation:
  $$
  y^2 + xy + y = x^3 + 1.
  $$
  (Recall that we always treat an affine plane variety as its projective closure in $\mathbb{P}^2$).

  The group of $\mathbb{F}_8$-rational points on this curve is isomorphic to $\mathbb{Z}_{14}$, with a generator given by the point $[w^6 : w : 1]$, where $w$ is a primitive element of $\mathbb{F}_8$ whose minimal polynomial over $\mathbb{F}_2$ is $x^3 + x + 1$.

  Next, we choose $H$ to be a subgroup of order 7:
  $$
  H = \{[0:1:0], [w^4:1:1], [w^4:w^4:1], [w^2:w^2:1], [w:1:1], [w:w:1], [w^2:1:1]\}.
  $$
  Let $D = \sum\limits_{h \in H} h - [0:1:0]$ and $G = 3[0:1:0] + [w^3:w^4:1]$ (recall that $[0:1:0] \in H$ while $[w^3:w^4:1] \notin H$). Using Magma, we find that a basis for $\mathscr{L}(G)$ is $\{1, x, \frac{y + w^2}{x + w^3}, \frac{xy + w^2 x}{x + w^3}\}$. With respect to this basis, we obtain the AG code $C$ with parameters $[6,4,3]$, whose generator matrix is:
  $$
  \begin{pmatrix}
  1 & 1 & 1 & 1 & 1 & 1 \\
  w^2 & w & w^4 & w^4 & w & w^2 \\
  w & w^6 & w^2 & 1 & w^4 & 0 \\
  w^3 & 1 & w^6 & w^4 & w^5 & 0
  \end{pmatrix}.
  $$
  which has the standard generator matrix
  $$
  \begin{pmatrix}
  1 & 0 & 0 & 0 & w^6 & w \\
  0 & 1 & 0 & 0 & w^4 & w^2 \\
  0 & 0 & 1 & 0 & w^5 & w \\
  0 & 0 & 0 & 1 & w^6 & w^6
  \end{pmatrix}.
  $$
\end{example}

\begin{example}{(For the case $ p = 3 $)}
  Consider an elliptic curve defined over $ \mathbb{F}_9 $ by the projective equation:
  $$
  Y^2Z = X^3 + XZ^2.
  $$

  The group of $ \mathbb{F}_9 $-rational points on this curve is isomorphic to $ \mathbb{Z}_4 \oplus \mathbb{Z}_4 $, and its generators are given by the points $ [1 : w^2 : 1] $ and $ [w^7 : w^2 : 1] $, where $ w $ is a primitive element of $ \mathbb{F}_9 $ whose minimal polynomial over $\mathbb{F}_3$ is $x^2 + 2x + 2$.

  Now, we choose $ H $ to be a subgroup of order 8 as follows:
  $$
  H = \{[0:1:0], [w:2:1], [w^7:w^2:1], [w^2:0:1], [w^7:w^6:1], [w^6:0:1], [w:1:1], [0:0:1]\}.
  $$

  Let $ D = \sum\limits_{h \in H} h $ and $ G = 3[2:1:1] $ (it is clear that $ [2:1:1] \notin H $). Then, using Magma, we find that a basis for $\mathscr{L}(G)$ is $\{\frac{x^2 + xy + x + 2}{2x + y^2 + 1}, \frac{x^2 + x + y + 1}{2x + y^2 + 1}, 1\}$. With respect to this basis, we obtain the AG code $C$ with parameters $[8,3,6]$, whose generator matrix is:
  $$
  \begin{pmatrix}
    0 & w^6 & w^3 & w^2 & 0 & 2 & w^2 & w^6 \\
    0 & w^6 & w^5 & w^5 & w^7 & 1 & 1 & w^7 \\
    1 & 1 & 1 & 1 & 1 & 1 & 1 & 1
  \end{pmatrix}
  $$
  which has the standard generator matrix
  $$
  \begin{pmatrix}
    1 & 0 & 0 & w^7 & w^2 & 1 & w^2 & w^7 \\
    0 & 1 & 0 & 1 & 2 & 2 & w & w^7 \\
    0 & 0 & 1 & w^3 & w^3 & 1 & w & w
  \end{pmatrix}.
  $$
\end{example}
Now, we provide one more example with $ p \neq 2, 3$.

\begin{example}{ (For the case $ p \neq 2, 3 $)}
  Consider the elliptic curve defined over $\mathbb{F}_{49}$ by the Weierstrass equation:
$$
y^2 = x^3 + x.
$$

The group of rational points on this curve is isomorphic to $\mathbb{Z}_8 \oplus \mathbb{Z}_8$, with generators $A_1 = [w^{41} : w^{28} : 1]$ and $A_2 = [w^{31} : w^6 : 1]$, where $w$ is a primitive element of $\mathbb{F}_{49}$ whose minimal polynomial over $\mathbb{F}_7$ is $x^2 + 6x + 3$.

We choose a subgroup $H \leq E(\mathbb{F}_{49})$ of order 32, isomorphic to $\mathbb{Z}_4 \oplus \mathbb{Z}_8$, generated by
$$
H_1 = [2]A_1, \quad H_2 = [7]A_1 + [2]A_2,
$$
where $[4]H_1 = [8]H_2 = \mathcal{O}$, the identity element.

Following the method outlined previously, we can construct MDS codes with parameters $[32, k, 33-k]$(If $k$ is even, we first obtain a $[31, k, 32-k]$ code, then extend it to a $[32, k, 33-k]$ code as in Corollary 1).

For example, for $k = 5$, let $Q = [w^{41} : w^{28} : 1]$ and take $G = 5Q$. 
Let $D(x,y)$ be the common denominator defined by
Using Magma, we find that a basis for $\mathscr{L}(G)$ is
\begin{equation*}
  \left\{1,\ \frac{N_1}{D(x,y)},\ \frac{N_2}{D(x,y)},\ \frac{N_3}{D(x,y)},\ \frac{N_4}{D(x,y)} \right\},
\end{equation*}
  where the numerators are given by:
\[
  \left\{
  \begin{aligned}
    N_1 &= 5x^2 + w^{42}xy^2 + 6xy + w^{36}x + y^3 + w^{33}y^2 + w^9, \\
    N_2 &= x^2y + w^{14}x^2 + w^{12}xy^2 + w^{36}x + w^{27}y^2 + w^{28}, \\
    N_3 &= w^{37}x^2 + w^{31}xy^2 + xy + w^{22}x + w^{25}y^2 + w^{37}, \\
    N_4 &= w^{36}x^2 + 5xy^2 + x + w^{46}y^2 + y + 6.
  \end{aligned}
  \right.
\]
With respect to this basis, we obtain the AG code $C$ with parameters $[32, 5, 28]$, whose generator matrix is:
$$
\begin{array}{c}
\left(
  \begin{array}{cccccccc}
    0 & w^{33} & w^{38} & 4 & w^{22} & 1 & w^{42} & w^{36} \\
    0 & w^{21} & 4 & w^3 & w^{10} & w^{30} & w^7 & w^7 \\
    0 & w & w^{44} & w^{22} & w^{37} & w^3 & 0 & 2 \\
    0 & w^{41} & w^{17} & w^{39} & w^{34} & w^{39} & 4 & w^3 \\
    1 & 1 & 1 & 1 & 1 & 1 & 1 & 1
  \end{array}
\right.
\quad
\begin{array}{cccccccc}
  1 & 6 & w^{26} & w^2 & w^{44} & w^{23} & w^{35} & w^{45} \\
  w^6 & w^{26} & w & w^{20} & w^{20} & w^{39} & 6 & w^{14} \\
  w^5 & 2 & w^{12} & w^{23} & w^{19} & w^{11} & w^{41} & w^{12} \\
  w^{23} & 0 & w^{43} & w^{13} & w^{14} & 4 & w^{27} & w^{15} \\
  1 & 1 & 1 & 1 & 1 & 1 & 1 & 1
\end{array}
\end{array}
$$

$$
\begin{array}{c}
\begin{array}{cccccccc}
  w^{14} & w^{38} & w^7 & w^{13} & w^2 & w^{41} & w^{47} & w^{10} \\
  w^7 & w^{28} & w^{15} & w^{29} & w^{42} & w^{13} & w^{44} & w^9 \\
  0 & w^{29} & w^{41} & w^4 & 3 & w^{10} & w^{28} & w^{31} \\
  6 & 2 & w^{15} & w^{18} & w^{42} & w^{37} & 1 & w^{38} \\
  1 & 1 & 1 & 1 & 1 & 1 & 1 & 1
\end{array}
\quad
\left.
  \begin{array}{cccccccc}
    w^4 & w^{36} & w & w^{38} & w^5 & w^{14} & w^{31} & w^{31} \\
    w^{42} & 0 & w^{26} & w^{44} & w^{10} & w^{27} & 5 & w^{36} \\
    w^{20} & w^7 & w^{10} & w^{42} & w^7 & 1 & w^{34} & w \\
    w^{21} & w^5 & w^7 & w^{23} & w^{47} & w^{28} & w^{17} & w^{39} \\
    1 & 1 & 1 & 1 & 1 & 1 & 1 & 1
  \end{array}
\right)
\end{array}
$$
which has the standard generator matrix
$$
\left(
\begin{array}{cccccccc}
  1 & 0 & 0 & 0 & 0 & w^{43} & 3 & 3 \\
  0 & 1 & 0 & 0 & 0 & w^{23} & w^{45} & w^{38} \\
  0 & 0 & 1 & 0 & 0 & w^{35} & w^{35} & w^7 \\
  0 & 0 & 0 & 1 & 0 & w^3 & w^{18} & w^9 \\
  0 & 0 & 0 & 0 & 1 & w^5 & w^2 & w^{34}
\end{array}
\right.
\quad
\begin{array}{cccccccc}
  w^{44} & 2 & w^{11} & w^{43} & w^{17} & w^{43} & w^5 & w^{42} \\
  w^3 & 3 & w^{21} & w^{17} & w^{43} & w^{26} & 6 & w^{35} \\
  w^{21} & w^{10} & w^{15} & w^{43} & w^{26} & w^{29} & w^{15} & w^{44} \\
  2 & w^{30} & w^{20} & w^{13} & w^{13} & w^{20} & w^{11} & w^{28} \\
  w^{20} & w^{35} & w^9 & w & w^{26} & w^{38} & w & w^6
\end{array}
$$

$$
\begin{array}{cccccccc}
  6 & 6 & 5 & w^{28} & w^{36} & w & w^9 & w^{41} \\
  w^{14} & w^{47} & w^{45} & w^7 & w^{22} & 1 & w^{21} & w^{18} \\
  w^6 & 4 & w^{46} & w^{13} & w^{11} & w^{25} & 5 & w^{46} \\
  w^{12} & w^{35} & w^{34} & w^{27} & w^3 & 6 & w^4 & w^{23} \\
  3 & w^{14} & w^6 & w^{41} & w^{44} & 1 & 2 & w^{29}
\end{array}
\quad
\left.
\begin{array}{cccccccc}
  w^{34} & w^{34} & w^7 & w^4 & w^{34} & w^{45} & w^{43} & 1 \\
  w^2 & w^{21} & 3 & w^{43} & w^{29} & w^{29} & 5 & w^9 \\
  w^{15} & 3 & w^{35} & w^{42} & w^{37} & w^{11} & w^2 & 2 \\
  w^5 & w^{37} & w^{35} & w^{35} & w^{41} & w^{19} & w^{47} & w^{26} \\
  w^{20} & w^{37} & w^9 & w^{34} & w^{27} & w^{14} & 2 & w^{23}
\end{array}
\right).
$$
\end{example}

In conclusion, we construct codes of length $ \frac{q+1+\lfloor 2\sqrt{q} \rfloor}{2} $, which can be seen as a generalization of the construction in \cite{HR}, as our method allows a more relaxed condition on the parameter $q$ and $k$.

\section{Equivalence}

In this section, we show that the above MDS codes arising from elliptic curves are inequivalent to Reed-Solomon codes.
The Schur product of linear codes is defined as follows.

\begin{definition}\label{schur product}

Let $\boldsymbol{x}=(x_1,x_2,\cdots ,x_n)$, $\boldsymbol{y}=(y_1,y_2,\cdots ,y_n) \in {\left( {\mathbb{F}_q} \right)^n}$, the \textit{Schur product} of $\boldsymbol{x}$ and $\boldsymbol{y}$ is defined as $\boldsymbol{x}*\boldsymbol{y}:=\left( {{x_1}{y_1}, \ldots ,{x_n}{y_n}} \right)$. The Schur product of two linear codes $C_1,C_2 \subseteq \mathbb{F}_q^n$ is defined as
$$
C_1*C_2: = span_{\F_q}\{x*y : x\in C_1,y\in C_2\}
$$
where $\mathrm{span}_{\mathbb{F}_q}(S)$ denotes the $\mathbb{F}_q$-linear span of $S$.
\end{definition}

In particular, if $C_1 = C_2 = C$, we define $C^2 := C * C$ as the Schur square code of $C$. Clearly, $C_1^2$ and $C_2^2$ are equivalent when $C_1$ and $C_2$ are equivalent.

\begin{theorem}\cite{CGGOT}
If $C$ is an (extended) GRS code with parameters $[n,k,n-k+1]$, then
$$
\dim(C^2)=\min(2k-1,n).
$$
\end{theorem}

Using the dimension of the Schur square of GRS codes, we can certify the nonequivalence of our codes by applying a similar technique to that used in recent works~\cite{BPR}. Therefore, any $[n,k]$ code $C$ with $k < \frac{n+1}{2}$ for which $\dim(C^2) \neq 2k-1$ is not equivalent to any GRS code.

\begin{theorem}

For $2 < k < n$, the code $C $ with parameters $[n,k,n-k+1]$ in Theorem~\ref{main result} and Corollary~\ref{co2n} satisfies
$$
\dim(C \ast C) = \min(2k, n).
$$
In particular, if $2 < k \leq n/2$, then $C$ is inequivalent to any Reed-Solomon code.
\end{theorem}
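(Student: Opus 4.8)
The plan is to identify the Schur square $C\ast C$ with the elliptic code attached to the divisor $2G$, and then read off its dimension from the Riemann--Roch theorem. Since $\mathrm{ev}_D(f)\ast\mathrm{ev}_D(h)=\mathrm{ev}_D(fh)$ and $fh\in\mathscr{L}(2G)$ whenever $f,h\in\mathscr{L}(G)$, one has $C\ast C=\mathrm{ev}_D\bigl(\mathscr{L}(G)\cdot\mathscr{L}(G)\bigr)\subseteq C_{\mathscr{L}}(D,2G)$, where $\mathscr{L}(G)\cdot\mathscr{L}(G)$ denotes the $\mathbb{F}_q$-span of all such products. The key point is the reverse inclusion of function spaces, namely $\mathscr{L}(G)\cdot\mathscr{L}(G)=\mathscr{L}(2G)$, which holds as soon as $\deg G=k\ge 3$. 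For the divisors $G=kQ$ and $G=(k-1)P+Q$ appearing in Theorem~\ref{main result} I would verify this directly from the monomial structure of Riemann--Roch spaces on a genus-one curve: $\mathscr{L}(mR)$ has a basis of functions with pole orders $0,2,3,\dots,m$ at $R$, and since every integer in $\{0\}\cup\{2,\dots,2k\}$ can be written as $a+b$ with $a,b\in\{0\}\cup\{2,\dots,k\}$ once $k\ge 3$, the corresponding products of basis elements have pairwise distinct pole orders; hence they are linearly independent and span the $2k$-dimensional space $\mathscr{L}(2G)$. (Equivalently, this is the classical normal generation of a divisor of degree $\ge 3$ on an elliptic curve.) For the extended code $C_{ext}$ of Corollary~\ref{co2n}, writing its codewords as $\bigl(\mathrm{ev}_D(f),\lambda_Q(f)\bigr)$ with $\lambda_Q(f)$ the coefficient of $t_Q^{-1}$ in the local expansion of $f$ at $Q$, the same computation gives $C_{ext}\ast C_{ext}=\bigl\{\bigl(\mathrm{ev}_D(h),\mu_Q(h)\bigr):h\in\mathscr{L}(2G)\bigr\}$, where $\mu_Q$ extracts the coefficient of $t_Q^{-2}$.

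Next I would compute the dimension. As $\deg(2G)=2k\ge 6$ and the genus is $1$, Riemann--Roch gives $\ell(2G)=2k$; the kernel of $\mathrm{ev}_D$ on $\mathscr{L}(2G)$ is $\mathscr{L}(2G-D)$, of degree $2k-n$. If $2k<n$ this kernel is $\{0\}$, so $\dim(C\ast C)=2k$; if $2k>n$ then $\ell(2G-D)=2k-n$, so $\dim(C\ast C)=n$. Either way $\dim(C\ast C)=\min(2k,n)$, and the same holds for $C_{ext}$, the relevant kernel now being $\mathscr{L}(2G-D-Q)$ of degree $2k-N/2$. The only case left is $n=2k$: here $\deg(2G-D)=0$, so one must show $2G\not\sim D$ in order to get $\ell(2G-D)=0$. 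By Lemma~\ref{PD} this amounts to checking that $[2]G\ominus\bigl(\bigoplus_{P_i\in\operatorname{supp}(D)}P_i\bigr)\ne O$ in the group law. For $C_{ext}$ this is automatic, since the relation would force $Q$ to lie in $H$, contradicting $Q\notin H$; for the code of Theorem~\ref{main result}(1) it reduces to the condition $[2k]Q\ne\bigoplus_{P\in H}P$, which one verifies from the structure of $H$ and the coset of $Q$. I expect this group-law analysis of the boundary case to be the main obstacle, the remaining ingredients being the short monomial computation (or a citation) and a routine use of Riemann--Roch.

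Finally, for the inequivalence statement: if $2<k\le n/2$ then $2k\le n<n+1$, hence $k<(n+1)/2$ and $\dim(C\ast C)=\min(2k,n)=2k$. Any $[n,k]$ (extended) generalized Reed--Solomon code, and in particular any Reed--Solomon code, has Schur square of dimension $\min(2k-1,n)=2k-1$; since the dimension of the Schur square is invariant under code equivalence and $2k\ne 2k-1$, the code $C$ cannot be equivalent to any Reed--Solomon code.
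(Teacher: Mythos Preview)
Your proposal follows essentially the same route as the paper: identify $C\ast C$ with $C_{\mathscr{L}}(D,2G)$ by showing $\mathscr{L}(G)\cdot\mathscr{L}(G)=\mathscr{L}(2G)$ via products with pairwise distinct pole orders, and then read off the dimension. A few remarks on the comparison.

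First, your pole-order argument is written for a one-point divisor $G=kQ$; for the two-point divisor $G=(k-1)P+Q$ it does not apply as stated, and you cover that case by invoking normal generation. The paper instead carries out this case explicitly, using the basis $\{1,f_2,\dots,f_{k-1},g\}$ with $v_P(f_i)=-i$, $v_Q(f_i)=0$, $v_P(g)=v_Q(g)=-1$, and listing the products $1,f_2,\dots,f_{k-1},f_2f_{k-2},\dots,f_{k-1}^2,g,g^2$ as a basis of $\mathscr{L}((2k-2)P+2Q)$. Either approach is fine.

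Second, you are actually \emph{more} careful than the paper on two points. For the extended code you interpret the extra coordinate intrinsically as the coefficient of $t_Q^{-1}$, so that its Schur square becomes $\{(\mathrm{ev}_D(h),\mu_Q(h)):h\in\mathscr{L}(2G)\}$ with $\mu_Q$ the $t_Q^{-2}$ coefficient; the paper handles this step rather summarily by observing that the appended column is a standard basis vector. More importantly, you flag the boundary case $2k=n$, where one must rule out $2G\sim D$ to get $\ell(2G-D)=0$. The paper passes directly from $\ell(2G)=2k$ to $\dim(C\ast C)=\min(2k,n)$ without addressing this. Your argument that for $C_{ext}$ the linear equivalence $2G-Q\sim D$ would force $Q\in H$ is correct and settles that case; for the odd-$k$ code of Theorem~\ref{main result}(1) you correctly reduce to $[2k]Q\neq\bigoplus_{P\in H}P$, but do not verify it, and indeed this need not hold for an arbitrary choice of $Q\in E(\mathbb{F}_q)\setminus H$. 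So at $2k=n$ both proofs have the same loose end; the theorem as stated is safe for the ``in particular'' clause whenever $2k<n$, and at $2k=n$ one should either verify the group-law condition or choose $Q$ accordingly.
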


\begin{proof}
  \begin{itemize}
    \item[(1).] For codes in Theorem \ref{main result}, we distinguish two cases:
      \begin{itemize}
        \item[(i).] In the case where $k$ is odd, let $G = kQ$ and define the evaluation divisor as $D = P_1 + \cdots + P_{N/2}$. For any $f, g \in \mathscr{L}(kQ)$, we have $fg \in \mathscr{L}(2kQ)$. Therefore,
        \begin{equation}\label{Schur subset}
        C \ast C \subseteq C_{\mathscr{L}}(D, 2kQ).
        \end{equation}
        On the other hand, choose functions $f_i \in \mathscr{L}(kQ)$ for $i = 2, \dots, k$ such that
        $$
        v_Q(f_i) = -i.
        $$
        Observe that no function in $\mathscr{L}(kQ)$ has $Q$-adic valuation $-1$, i.e., there exists no $f \in \mathscr{L}(kQ)$ such that $v_Q(f) = -1$. Hence, the set
        $$
        \{1, f_2, \dots, f_k\}
        $$
        forms a basis for the Riemann-Roch space $\mathscr{L}(kQ)$. The corresponding generator matrix of the code $C=C_{\mathscr{L}}(D, G)$ is given by
        $$
        M(D, G) =
        \begin{pmatrix}
          1 & 1 & \cdots & 1 \\
          f_2(P_1) & f_2(P_2) & \cdots & f_2(P_n) \\
          f_3(P_1) & f_3(P_2) & \cdots & f_3(P_n) \\
          \vdots & \vdots & \ddots & \vdots \\
          f_k(P_1) & f_k(P_2) & \cdots & f_k(P_n)
        \end{pmatrix}.
        $$
        Hence $C\ast C$ has a generator matrix of the form
        $$
        M(D, G)\ast   M(D, G): =
        \begin{pmatrix}
          1 & 1 & \cdots & 1 \\
          f_2(P_1) & f_2(P_2) & \cdots & f_2(P_n) \\
          f_3(P_1) & f_3(P_2) & \cdots & f_3(P_n) \\
          \vdots & \vdots & \ddots & \vdots \\
          f_k(P_1) & f_k(P_2) & \cdots & f_k(P_n)\\
          \vdots & \vdots & \cdots & \vdots\\
          f_i(P_1)f_j(P_1) & f_i(P_2)f_j(P_2) & \cdots & f_i(P_n)f_j(P_n) \\
          \vdots & \vdots & \cdots & \vdots
        \end{pmatrix}
        $$
        where $i, j$ run through $2\leq i\leq j\leq k$.
        Moreover, the functions
        $$
        1, f_2, \dots, f_k, f_2 f_{k-1}, f_2 f_k, \dots, f_k f_k \in \mathscr{L}(2kQ)
        $$
        have distinct $Q$-adic valuations
        $$
        0, -2, -3, \dots, -k, -k-1, -k-2, \dots, -2k.
        $$
        This implies that they are linearly independent and hence form a basis of $\mathscr{L}(2kQ)$.

        Consequently,
        \begin{equation}\label{Schur supset}
        C \ast C \supseteq C_{\mathscr{L}}(D, 2kQ).
        \end{equation}
         Combining (\ref{Schur subset}) and (\ref{Schur supset}), we obtain
         \[ C \ast C =C_{\mathscr{L}}(D, 2kQ)\]
         which implies
        $$
        \dim(C \ast C) =\dim C_{\mathscr{L}}(D, 2kQ)= \min\left(\dim \mathscr{L}(2kQ), n\right) = \min(2k, n).
        $$

        By contrast, the Schur square of any $[n, k]$ Reed-Solomon code with $k \le n/2$ has dimension $2k - 1$. Therefore, $C$ is not equivalent to any Reed-Solomon code.

        \item[(ii).] In the case where $k$ is even, let $G = (k - 1)P + Q$ and define the evaluation divisor as $D = P_1 + \cdots + P_{N/2 - 1}$.  For the code $C=C_{\mathscr{L}}(D, (k - 1)P + 1Q)$, we have
        $$
        C \ast C \subseteq C_{\mathscr{L}}(D, (2k - 2)P + 2Q).
        $$

        On the other hand, recall $f_i$ and $g$ defined in (\ref{basis of Q}),  the functions
        $$
        1, f_2, \dots, f_{k - 1},\ f_2 f_{k - 2},\ f_2 f_{k - 1},\ \dots,\ f_{k - 1} f_{k - 1},\ g,\ g^2 \in \mathscr{L}((2k - 2)P + 2Q)
        $$
        are linearly independent and form a basis of the Riemann-Roch space $\mathscr{L}((2k - 2)P + 2Q)$. Consequently,
        $$
        C \ast C \supseteq C_{\mathscr{L}}(D, (2k - 2)P + 2Q)
        $$
        which yields
        $$C \ast C = C_{\mathscr{L}}(D, (2k - 2)P + 2Q). $$
        Therefore,
        $$
        \dim(C \ast C) =\dim C_{\mathscr{L}}(D, (2k - 2)P + 2Q)=\min\left( \dim \mathscr{L}((2k - 2)P + 2Q),\ n \right) = \min(2k, n).
        $$
        As a result, for $k \le n/2$, the code $C$ is inequivalent to any Reed-Solomon code.
      \end{itemize}
    \item[(2).] For codes in Corollary \ref{co2n}:

    Similarly to the previous case, consider a basis of the Riemann-Roch space $\mathscr{L}((2k - 2)P + 2Q)$:
    $$
    1, f_2, \dots, f_{k - 1},\ f_2 f_{k - 2},\ f_2 f_{k - 1},\ \dots,\ f_{k - 1} f_{k - 1},\ g,\ g^2 \in \mathscr{L}((2k - 2)P + 2Q).
    $$

    Note that the last column is a vector with exactly one entry equal to $1$ and all other entries equal to $0$. For the code $C=C_{ext}(D, (k - 1)P + Q)$ generated by $M_{ext}(D, (k - 1)P + Q)$ defined in (\ref{generator of ext}), we have
    $$
    \dim(C \ast C) =\dim C_{\mathscr{L}}(D, (2k - 2)P + 2Q) =\min\left( \dim \mathscr{L}((2k - 2)P + 2Q),\ N/2 \right) = \min(2k, N/2).
    $$

  Hence, for $k \le N/4$, $C$ is inequivalent to any Reed-Solomon code.
  \end{itemize}
\end{proof}

\section{Conclusion and Outlook}

In this paper, we have constructed several classes of MDS codes arising from elliptic curves, with lengths that exceed previously known results in most cases. Moreover, we have shown that some of these codes are not equivalent to any Reed-Solomon code. These constructions provide concrete evidence supporting the conjecture on the maximal length of MDS codes derived from elliptic curves and indicate that further insights may be gained by examining the underlying abelian group structures of the curves.

For any $2 < k < n-2$, it appears that all the codes presented in Section~4 are inequivalent to Reed-Solomon codes. However, the Schur product alone is not sufficient to establish this result, and new techniques are required to address this problem.


\begin{thebibliography}{1}

  \bibitem{A} Y. Achnine,
  ``On the Main Conjecture on Algebraic-geometric MDS Codes: A Result Of,'' \textit{Proc. of a Conference}, 2011. [Online]. Available: \url{https://api.semanticscholar.org/CorpusID:49070173}.

  \bibitem{AGS} A. Aguglia, L. Giuzzi, and A. Sonnino,
  ``Near-MDS codes from elliptic curves," Des., Codes Cryptogr., vol. 89, no. 5, pp. 965-972, May 2021.

  \bibitem{BPR} P. Beelen, S. Puchinger, and J. Rosenkilde,
  ``Twisted Reed-Solomon codes," {\it IEEE Trans. Inf. Theory,} vol. 68, no. 5, pp. 3047-3061, May 2022.

  \bibitem{C} H. Chen,
  ``Many non-Reed-Solomon type MDS codes from arbitrary genus algebraic curves," \textit{IEEE Trans. Inf. Theory,} vol. 70, no. 7, pp. 4856-4864, Jul. 2024.

  \bibitem{C2} H. Chen,
  "On the main conjecture of geometric MDS codes," {\it Int. Math. Res. Notices,} vol. 8, pp. 313-318, Jan. 1994.

  \bibitem{CGGOT} A. Couvreur, P. Gaborit, V. Gauthier-Umaa, A. Otmani, and J.-P. Tillich,
  ``Distinguisher-based attacks on public-key cryptosystems using Reed-Solomon codes,'' \textit{Des. Codes Cryptogr.}, vol. 73, pp. 641-666, 2014.

  \bibitem{COP} W.E.Cherowitzo, C. M. O'Keefe, and T. Penttila,
  ``A unified construction of finite geometries associated with $q$-clans in characteristic $2$," {\it Adv.Geom.,} vol.3, no.1, pp.1-21, 2003.

  \bibitem{CY} N. Cai and R. W. Yeung,
  ``Secure network coding," {Proc. IEEE Int. Symp. Inf. Theory (ISIT),} Jun. 2002, p. 323.

  \bibitem{G} V. D. Goppa,
  ``Codes that are associated with divisors," \textit{Problemy Pereda$\check{c}$i Informacii,} vol. 13, no. 1, pp. 33-39, 1977.

  \bibitem{GLLS} G. Guo, R. Li, Y. Liu, and H. Song,
  "Duality of generalized twisted Reed-Solomon codes and Hermitian self-dual MDS or NMDS codes," Cryptogr. Commun., vol. 15, no. 2, pp. 383-395, Mar. 2023.

  \bibitem{HR} D. Han and Y. Ren,
  ``A tight upper bound for the maximal length of MDS elliptic codes," \textit{IEEE Trans. Inf. Theory,}  vol. 69, no. 2, pp. 819-822, Feb. 2023.

  \bibitem{HR2} D. Han and Y. Ren,
  ``The maximal length of $q$-ary MDS elliptic codes is close to $q/2$," {\it Int. Math. Res. Not. IMRN,}  no. 11, pp. 9036-9043, Jun. 2024.

  \bibitem{Huang} C. Huang, H. Simitci, Y. Xu, A. Ogus, B. Calder, P. Gopalan, J. Li, and S. Yekhanin,
  ``Erasure coding in Windows Azure storage,'' in Proc. USENIX Annu. Tech. Conf. (ATC), Boston, MA, USA, Jun. 2012, pp. 15-26. [Online]. Available: \url{https://www.usenix.org/conference/atc12/technical-sessions/presentation/huang}.

  \bibitem{LCCN} F. Li, Y. Chen, H. Chen and Y. Niu,
  ``Non-Reed-Solomon type cyclic MDS codes," {\it IEEE Trans. Inform. Theory,} vol. 71 , no. 5, pp. 3489-3496, May. 2025.

  \bibitem{LMX} X. Li, L. Ma, and C. Xing,
  ``Optimal locally repairable codes via elliptic curves," {\it IEEE Trans. Inf. Theory,} vol. 65, no. 1, pp. 108-117, Jan. 2019.

  \bibitem{LWZ} J. Li, D. Wan, and J. Zhang,
  ``On the minimum distance of elliptic curve codes,'' {\it Proc. IEEE Int. Symp. Inf. Theory (ISIT),} , pp. 2391-2395, Jun. 2015.

  \bibitem{RSK} K. V. Rashmi, N. B. Shah, and P. V. Kumar,
  ``Optimal exact-regenerating codes for distributed storage at the MSR and MBR points via a product-matrix construction," \textit{IEEE Trans. Inf. Theory,} vol. 57, no. 8, pp. 5227-5239, Aug.  2011.

  \bibitem{M} C. Munuera,
  ``On MDS elliptic codes," \textit{Discrete Math.,} vol. 117, pp. 279-286, 1993.

  \bibitem{M2} C. Munuera,
  ``On the main conjecture on geometric MDS codes," \textit{Discrete Math.,} vol. 38, no. 5, pp. 1573-1577, Sep. 1992.

  \bibitem{MS} F. J. MacWilliams and N. J. A. Sloane,
  \textit{The theory of error-correcting codes}, vol. 16, North-Holland, Amsterdam, The Netherlands, 1977.

  \bibitem{N} A. Neri,
  "Twisted linearized Reed-Solomon codes: A skew polynomial framework," J. Algebra, vol. 609, pp. 792-839, Nov. 2022.

  \bibitem{PPP} S. E. Payne, T. Penttila, and I. Pinneri,
  ``Isomorphisms between Subiaco q-clan geometries,'' {\it Bull. Belg. Math. Soc. Simon Stevin,} vol. 2, no. 2, pp. 197-222, 1995.

  \bibitem{RL} R. M. Roth and A. Lempel,
  ``A construction of non-Reed-Solomon type MDS codes," {\it IEEE Trans. Inf. Theory,} vol. 35, no. 3, pp. 655-657, May 1989.

  \bibitem{Si} J. H. Silverman,
  {\it The arithmetic of elliptic curves}(Graduate Texts in Mathematics), vol. 106, Dordrecht: Springer, 2009.

  \bibitem{St} H. Stichtenoth,
  \textit{Algebraic function fields and codes}(Graduate Texts in Mathematics), vol. 254. Berlin, Germany: Springer, 2009.

  \bibitem{Wa}J. L. Walker,
  ``A new approach to the main conjecture of algebraic-geometric MDS codes," {\it Des., Codes Cryptogr.,} vol. 9, pp. 115-120, Jan. 1996.

  \bibitem{W} W.C. Waterhouse,
  ``Abelian varieties over finite fields," \textit{Ann.Sci.$\acute{E}$cole Norm. Sup.,} vol. 2,  pp. 521-560, 1969.

  \bibitem{ZW} J. Zhang and D. Wan,
  ``On deep holes of elliptic curve codes," {\it IEEE Trans. Inf. Theory,} vol. 69, no. 7, pp. 4498-4506, Jul. 2023.

  \bibitem{ZZ} Y. Zhi and S. Zhu,
  ``New MDS codes of non-GRS type and NMDS codes," {\it Discrete Math.,} vol. 348, no. 5, Paper No. 114436, 18 pp., 2025.

  \bibitem{ZZ2} Y. Zhu and C.-A. Zhao,
  ``On iso-dual MDS codes from elliptic curves,'' {\it Finite Fields and Their Applications,} vol. 109, 2026, Art. no. 102699.

\end{thebibliography}
\end{document}